%
%
%
%
%

\documentclass[twocolumn]{svjour3}                     
\smartqed  
\usepackage{graphicx}
\usepackage{algorithmic}
\usepackage{algorithm}
\usepackage{epsfig}

%
%
%
%
%
\begin{document}

\title{Hypo-Steiner Heuristic for Multicast Routing in All-Optical WDM Mesh Networks}


\subtitle{} 


\author{Fen Zhou \and Mikl\a'os Moln\a'ar \and Bernard Cousin
}


\institute{
              Fen Zhou, Mikl\a'os Moln\a'ar, and Bernard Cousin\at
              IRISA, Campus de Beaulieu, Rennes 35042, France   \\
              Tel.: +33-299847194\\
              Fax: +33-299847171\\
              \email{\{fen.zhou, molnar, bernard.cousin\}@irisa.fr}
             \and
             Fen Zhou, and Mikl\a'os Moln\a'ar \at
             INSA Rennes, 20, Avenue des Buttes de Co\"{e}smes, Rennes 35042, France
             \and
             Bernard Cousin \at
             University of Rennes 1, Campus de Beaulieu, Rennes 35042, France
}

\date{Received: date / Accepted: date}

\maketitle

\begin{abstract}
In sparse light splitting all-optical WDM networks, the more destinations a light-tree can accommodate, the fewer light-trees and wavelengths a multicast session will require. In this article, a Hypo-Steiner Light-tree algorithm (HSLT) is proposed to construct a HSLT light-tree to include as many destinations as possible. The upper bound cost of the light-trees built by HSLT is given as $N(N-1)/2$, where $N$ is the number of nodes in the network. The analytical model proves that, under the same condition, more destinations could be held in a HSLT light-tree than a Member-Only~\cite{xjzhang2000} light-tree. Extensive simulations not only validate the proof but also show that the proposed heuristic outperforms the existing multicast routing algorithms by a large margin in terms of link stress, throughput, and efficiency of wavelength usage.
\keywords{All-optical WDM Networks \and Multicast \and Routing and Wavelength Assignment (RWA) \and Sparse Light Splitting \and Light-tree \and Hypo-Steiner Heuristic}
\end{abstract}

\section{Introduction}
\label{Introduction}
With the increasing popularity of bandwidth-driven and time sensitive applications (like Video-Conference, HTDV and Video On Demand), all-optical WDM networks for the Next Generation Internet (NGI) need to provide multicast services with high throughput and small latency~\cite{jsTurner2002}. In order to support multicast in WDM networks, a light-tree concept was proposed in~\cite{lhSahasrabuddhe1999}, which is a tree in the physical topology and occupies the same wavelength over all the fiber links in the tree. It has been found to provide substantial savings in the network-wide average packet hop distance and the total number of transceivers in the networks~\cite{bMukherjee2000}. However, due to the physical constraints and characteristics in all-optical WDM networks, multicast routing is a challenging work. First, in the absence of any wavelength conversion device, the same wavelength should be employed over the light-tree, which is referred as the wavelength continuity constraint~\cite{bMukherjee2000}. Meanwhile, two or more light-trees traversing the same fiber link must be assigned different wavelengths, so that they do not interfere with one another, which is referred as the distinct wavelength constraint~\cite{bMukherjee2000}. In addition, since all-optical multicast has to distribute packets in the optical domain, branching nodes (or switch nodes) in a light-tree is required to be equipped with light splitters. By employing the light splitting capability, the branching node is able to replicate the incoming packets in the optical domain and forward them to all the required outgoing ports. Usually, a node capable of light splitting is named as a Multicast Capable node (MC)~\cite{rMalli1998}. Otherwise, it is a Multicast Incapable node (MI)~\cite{rMalli1998}. Typically, the network nodes at least possess of the Tap and Continue (TaC~\cite{mAli2000Cost}) capability to tap into the light signal for local consumption and forward it to only one outgoing port. From the point of optical energy budget, a light splitter reduces the power level of a light signal by a factor equal to the number of optical copies. The reduction of power should be compensated by internal active amplifiers like erbium-doped fiber amplifier (EDFA)~\cite{eDesurvire1991}, which, however, introduce many problems such as Gain Dispersion, Gain Saturation and Noise~\cite{sgYan2003}. Consequently, the complex architectures along with the high-cost of optical amplification make MC nodes much more expensive than the MI nodes. That is why only a subset of nodes in the WDM networks support light splitting, which is characterized as sparse light splitting~\cite{rMalli1998}.

Multicast routing in sparse light splitting WDM networks mainly involves two subproblems: light-tree construction and wavelength assignment. The first subproblem is to build a light-tree to include destinations in a multicast session under the degree limitation of MI nodes. Since the MI nodes cannot be a branching node in a light-tree, one light-tree may not be enough to include all the destinations in the multicast session and several ones may be required, which thus forms a light-forest~\cite{xjzhang2000}. In a light-tree, generally, one emitter of the source node transmits light signals on only one wavelength to the spanned destinations. In case that the source node can not split, several emitters are deployed to deliver light signals through different outgoing ports while using the same wavelength. Typically, by employing iterative light-tree formation algorithms like Member-Only~\cite{xjzhang2000}, the set of light-trees computed for the same multicast session are not edge disjoint. Referring to the wavelength distinct constraint~\cite{bMukherjee2000}, the maximum number of wavelengths required per fiber by a multicast session equals to the number of light-trees constructed for this multicast session. Given a multicast group, the more destinations a light-tree is able to span, the fewer wavelengths will be needed. A lot of research has addressed the light-tree formation algorithms \cite{xjzhang2000,sgYan2003,xhJia2001,fZhou2008PNC,fZhou2009IC3N,fZhou2008ICCS}. In~\cite{xhJia2001}, a QoS routing algorithm is proposed to construct a light-tree with delay constraint and the cost approximation of light-tree is given. However, all the nodes are assumed to be capable of splitting this is why only one light-tree is required. In~\cite{xjzhang2000} the Reroute-to-Source algorithm proposes to make use of the shortest path tree from the source to all the destinations. Since the non-splitter MI nodes are not considered during the construction of the light-trees, the nodes connected to a MI node have to use different wavelengths to communicate with the source. Thereby, the number of destinations included in a light-tree is limited, and this is costly in terms of link stress (i.e., the maximum number of wavelengths required per fiber) and the total cost (i.e., the total wavelength channels used for a multicast session). In~\cite{sgYan2003} a Tabu search heuristic is addressed to find the optimal light-tree. As we know Tabu search is just a local search method, the result depends a lot on the initial solution, and hence the number of destinations included in one light-tree cannot be guaranteed. Besides, Tabu search is not time efficient for dynamic multicast traffic. The Member-Only algorithm proposed in~\cite{xjzhang2000} takes the splitting capability of nodes into account when constructing a light-tree. It is currently thought to have the best link stress and total cost for multicast routing in all-optical networks with spare light splitting and without wavelength conversion. However, Member-Only is just a simple application of the Minimum Path Heuristic~\cite{hTakahashi1980}, which uses exclusively the shortest path to connect the nearest destination to a light-tree. It may face failures (Figs.~\ref{fig: hypo1} and \ref{fig: hypo2}), since a shortest path is more likely to traverse Tac-exhausted MI nodes (i.e., a non-leaf MI nodes in a light-tree under construction), which is not permitted by the light splitting constraint. In fact, in case the shortest path does not work, a longer constraint satisfied path can be a good choice to connect a destination to the light-tree. Hence, the number of destinations in a Member-Only light-tree still could be improved (Fig.~\ref{fig: number of destinations}). In \cite{fZhou2008PNC}, a trade off is found between the link stress and the average delay. The DijkstraPro algorithm is proposed to construct the shortest path tree while taking the sparse light splitting into consideration. Although the delay is diminished, the links stress is still a little bigger than that of Member-Only algorithm. In \cite{fZhou2008ICCS}, a distance priority is introduced during the spanning of multicast light-trees, which improves the delay of Member-Only light-trees while keeping almost the same link stress.

In this article, we try to improve the link stress and the network throughout in all-optical WDM networks with multicast traffic and sparse light splitting constraint. It is assumed that the costly wavelength conversion devices are not available in our optical networks and the backward direction of an optical fiber cannot be used on the same wavelength as its forward. Since the more destinations a light-tree can hold the fewer wavelengths a multicast session will require, we introduce the Hypo-Steiner Heuristic to include as many destinations as possible in a light-tree for a multicast session. By deleting on-leaf MI nodes in the subtree from the original graph and computing the nearest destination plus the shortest paths in the modified graph, the proposed Hypo-Steiner Heuristic is able to easily find more available paths fulfilling the optical constraints to connect destinations to the same light-tree. It always computes the shortest one among the paths satisfying the constraints, which is the shortest path between two nodes in the modified graph after deleting. However, this path is not necessary to be the shortest path between these two nodes in the original graph. That is why we call it Hypo-Steiner Heuristic.

The rest of this article is organized as follows. In the next Section, multicast routing and wavelength assignment problem in sparse light splitting WDM network is formulated and the previous work is reviewed. Then Hypo-Steiner Heuristic is presented and analyzed in Section~\ref{sec:Hypo-Steiner Light-Tree Algorithm}. Numerical results are obtained and evaluated through the comparison among the proposed algorithm and the most efficient and state-of-art algorithms in Section~\ref{sec: Simulation And Numerical Results}.  Finally, a conclusion is made in section~\ref{sec:Conclusion}.

\section{All-Optical Multicast Routing With Sparse Light Splitting}
\label{sec:All-Optical Multicast Routing With Sparse Light Splitting}
\subsection{System Model and Problem Description}
\label{sec:System Model}
The sparse light splitting WDM network can be modeled by an undirected graph $G(V,E,c,W)$. $V$ represents the vertex-set of $G$. Each node $v \in V$ is deployed to be either an MI or an MC node beforehand. $E$ represents the edge-set of $G$, which corresponds to the fiber links between the nodes in the network.
  \begin{eqnarray}
    \nonumber V = \{v | v = MI~or~v = MC\}\\
        |V|=N, |E|=M
  \end{eqnarray}

$W$ denotes the number of wavelengths supported in each fiber link. Each edge $e \in E$ is associated with a cost functions $c(e)$. We define the cost $c(e)$ of each edge as
  \begin{equation}
    c(e) = 1~unit~hop-count~fiber~cost
  \end{equation}

  Function $c$ is additive over the links of a lightpath $LP(u,v)$ between two end point nodes $u$ and $v$,
  \begin{equation}
   c\big(LP(u,v) \big) = \sum_{e \in LP(u,v)}{c(e)}
  \end{equation}

  We consider a multicast session $ms(s,D)$, which requests for setting up a set of light-trees under optical constraints (continuous wavelength constraint, distinct wavelength constraint, sparse light splitting and lack of wavelength conversion) from the source $s$ to a group of destinations $D$ simultaneously. Without loss of generality, assume there are $|D|=K$ destinations in a multicast session and an iterative light-tree construction heuristic builds $k$ light-trees $LT_{i}(s,D_{i})$ in sequence for $ms(s,D)$, where $i \in [1,k]$, and $1 \leq k \leq K \leq N-1$. Although the $i^{th}$ light-tree $LT_{i}(s,D_{i})$ may span some destinations already spanned in the previous light-trees,  $D_{i}$ is used to denote exclusively the set of newly served destinations in the $i^{th}$ light-tree (these newly served destinations can not be spanned in the previous light-trees, the $i^{th}$ light-tree is built to serve them uniquely). Hence, these $k$ sets $D_{i}$ are disjoint,
  \begin{equation}
    \label{equation: disjoint}
    \forall i, j \in [1,k]~and~i \neq j,~D_{i} \cap D_{j} = {\O}
  \end{equation}

  Since all the destinations in $D$ are spanned by $k$ light-trees, we obtain
  \begin{equation}
    D = \bigcup_{i=1}^{k}{D_{i}}
  \end{equation}

  Let a positive integer $l_{i}$ denote the size of the subset $D_{i}$, then we have
 \begin{eqnarray}
 \label{equation: li}
 \nonumber \forall i \in [1,k], |D_{i}|=l_{i} \geq 1\\
           \sum_{i=1}^{k}{l_{i}} = |D| = K
 \end{eqnarray}

First, regarding the optimization of network resources, the total cost (i.e., the total wavelength channels used for a multicast session) and the link stress (i.e., the maximum number of wavelengths required per link by a multicast session) should be minimized. The total cost of a multicast session can be calculated by the sum of cost in all the light-trees build for the multicast session.
\begin{eqnarray}
 \label{equation: cost}
 \nonumber c\big(ms(s,D)\big) &=& \sum_{i=1}^{k}{c\big[LT_{i}(s,D_{i})\big]}\\
                    &=& \sum_{i=1}^{k}\sum_{e\in LT_{i}(s,D_{i})}{c(e)}\\
 \nonumber          &=& \sum_{i=1}^{k}\sum_{e\in LT_{i}(s,D_{i})}{1}
 \end{eqnarray}

Since the light-trees for the same multicast session are not edge-disjoint, they could not be assigned the same wavelength. For a multicast session, link stress equals to the number of light-trees $k$ built for the multicast session.
    \begin{equation}
        Stress\big(ms(s,D)\big) = k
    \end{equation}

Secondly, considering the network throughput, given $W$ as the number of wavelengths supported in one fiber, the more multicast sessions can be accepted, the better.

  \begin{equation}
    Throughput = Num(accepted~sessions)|_{W}
  \end{equation}

 However, not all these parameters could be optimized simultaneously. The main objective in this paper is trying to build a light-tree, which is able to include as many destinations as possible for a multicast session. Accordingly, the number of light-trees built for a multicast session will be minimized.

 \begin{equation}
  \max\{|D_{i}|=l_{i}\}~and~\min\{k\}
 \end{equation}

 \subsection{Member-Only Algorithm}
 \label{sec: Previous Work}

The heuristic considered as the best one is the Member-Only algorithm which derives from the Minimum Path Heuristic~\cite{hTakahashi1980}. It involves three nodes sets during the construction of the light-tree. In a subtree $LT$ under construction,

   $MC\_SET$: includes source node, MC nodes and the leaf MI nodes. They may be used to span light-tree $LT$ and, thus are also called connector nodes in $LT$.

   $MI\_SET$: includes only the non-leaf MI nodes, whose splitting capability is exhausted. Hence, these nodes are not able to connect a new destination to the subtree $LT$.

   $D$: includes unserved multicast members which are neither joined to the current light-tree $LT$ nor to the previously constructed multicast light-trees.\\

In Member-Only algorithm, the shortest paths between all pairs of nodes in the network are computed beforehand and stored. The span of the distribution light-tree $LT$ begins with the source: $LT=\{s\}$, $MI\_SET=\O$, $MC\_SET=\{s\}$ and $D=\{all~destinations\}$. At each step, try to find the nearest destination from $d \in D$ to $c \in MC\_SET$, whose shortest path $SP(d, c)$ does not involve any node in $MI\_SET$. If it is found, $SP(d, c)$ is added to $LT$ and the sets are updated: $d$ and the MC nodes are added to $MC\_SET$,  non-leaf MI nodes are added to $MI\_SET$, and $d$ is removed from $D$. Otherwise (i.e., no such constraints-satisfying shortest path could be found), the current light-tree $LT$ is finished, and another light-tree assigned a new available wavelength is started using the same procedure until no destination is left in $D$.

\section{Hypo-Steiner Light-Tree Algorithm}
\label{sec:Hypo-Steiner Light-Tree Algorithm}
\subsection{Hypo-Steiner Heuristic}
\label{sec:Hypo-Steiner Heuristic}
\begin{algorithm}
    \algsetup{indent=2em}
    \caption{Hypo-Steiner Light-tree Algorithm}
    \label{alg1}
    \begin{algorithmic}[1]
    \REQUIRE {A graph $G(V,E,c,W)$ and a multicast session $ms(s,D_{0})$.}
    \ENSURE {A set of Light-trees $LT_{k}(s,D_{k})$ each on a different wavelength $w_{k}$ for $ms(s,D_{0})$.}
    \STATE {$k \leftarrow 1$} \COMMENT {$k$ is the serial number of a light-tree}
    \STATE {$D \leftarrow D_{0}$}
    \WHILE {$(D\neq \O)$}
        \STATE {$i \leftarrow 1$} \COMMENT {$i$ is the serial number of a renewed graph}
        \STATE {$G_{i} \leftarrow G(V,E,c,W)$}
        \STATE {$MC\_SET \leftarrow \{s\}$}
        \STATE {$LT_{k} \leftarrow \{s\}$}
        \WHILE {$(SP_{G_{i}}(\textbf{d},\textbf{c}) \neq \infty)$}
            \FORALL {$(d \in D$ and $c \in MC\_SET$)}
                \STATE {Compute the shortest path $SP_{G_{i}}(d,c)$ in $G_{i}$}.
            \ENDFOR
            \STATE {Find the nearest destination $\textbf{d}$ with $SP_{G_{i}}(\textbf{d},\textbf{c})$}
            \STATE {$LT_{k} \leftarrow LT_{k} \cup SP_{G_{i}}(\textbf{d},\textbf{c})$}
            \STATE {$MC\_SET \leftarrow MC\_SET \cup \{$MC in $ SP_{G_{i}}(\textbf{d},\textbf{c})\} \cup \{\textbf{d}\}$}
            \IF {($\textbf{c}$ is an MI node)}
                 \STATE {$MC\_SET \leftarrow MC\_SET \setminus \{\textbf{c}\}$}
            \ENDIF
            \STATE {$D \leftarrow D \setminus \{d\}$}
            \STATE {$G_{i+1} \leftarrow G_{i} \setminus \{$edges and MI nodes in $SP_{G_{i}}(\textbf{d},\textbf{c})\}$}
            \STATE {$i \leftarrow i+1$}
        \ENDWHILE
        \STATE {Assign wavelength $w_{k}$ to $LT_{k}$}
        \STATE {$k \leftarrow k+1$} \COMMENT {Star a new light-tree $LT_{k+1}$}
    \ENDWHILE
    \end{algorithmic}
    \end{algorithm}
During the construction of a light-tree $LT$, non-leaf MI nodes in $LT$ have exhausted their TaC capability and could not be used again to connect another destination to the subtree. In other words, these non-leaf MI nodes are useless for the spanning of the current light-tree $LT$. Thus, why do not we delete them from the graph? At each step, we compute the shortest paths and the distances from the destinations in set $D$ to the subtree $LT$ in a new graph, say $G_{i}$ (generated by deleting all non-leaf MI nodes in $LT$ from the original graph $G$). Then, find the nearest destination and join it to the subtree with the shortest path in $G_{i}$. Here, we can see, it is definitely true that the shortest paths between any two nodes in the new graph $G_{i}$ will not traverse any MI nodes in $MI\_SET$. Hence, by computing the shortest path in the modified graph $G_{i}$, when searching the nearest destination to the subtree $LT$, we need not to check whether its shortest path to $LT$ (precisely speaking, to its connector node in $MC\_SET$) satisfies the splitting constraints or not. The benefits of the Hypo-Steiner Heuristic are based on the following two observations.

First, Hypo-Steiner Heuristic can enumerate all the possible shortest paths which are able to connect a destination to the light-tree while respecting constraints. When using Dijkstra to compute the shortest path from node $d_{i} \in D$ to a connector node $c_{j} \in MC\_SET$, only one could be found, although there may be several ones. If the useless non-leaf MI nodes are still in the graph, then the shortest path that traverses a node in $MI\_SET$ cannot be avoided in the computation. If, unfortunately, the founded shortest path from a destination $d_{i}$ to the current subtree $LT$ traverses a node in $MI\_SET$, then $d_{i}$ could not be joined to the current light-tree $LT$ according to the Member-Only algorithm~\cite{xjzhang2000} and consequently a new light-tree on another wavelength will be needed. However, after deleting all the useless non-leaf MI nodes and the used edges from the original graph $G$, the other shortest path, which does not involve any node in $MI\_SET$, could be definitely found in the new graph if it exists.

    \begin{figure}
        \begin{center}
        \includegraphics[width=.4\textwidth]{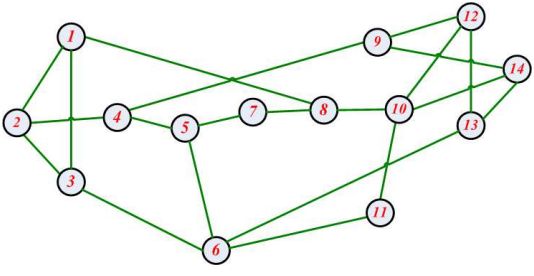}
        \end{center}
        \caption{NSF Network Topology}
        \label{fig: NSFnet}       
    \end{figure}

\textbf{Example 1:}
In the NSF network of Fig.~\ref{fig: NSFnet}, a multicast session $ms_{1}\big(s: 8, D:(4, 6)\big)$ request arrives, and only source $s$ is an MC node. Using Member-Only algorithm, node 4 is firstly connected to node 8 using the shortest path $SP(8-7-5-4)$. Now, $MC\_SET = \{4, 8\}$, $MI\_SET = \{5, 7\}$. Next, compute the shortest paths from node 6 to the nodes in $MC\_SET$. $SP(4-5-6)$ involves non-leaf MI node 5, so it does not work. If $SP(8-7-5-6)$ is unfortunately computed out, then a new tree should be employed to accommodate node 6 as in Fig.~\ref{fig: hypo1}(a). But, with the help of Hypo-Steiner Heuristic, node 5 and 7 are deleted from the original graph $G_{1}$. Hence, $SP(8-7-5-6)$ (Fig.~\ref{fig: hypo1}(a)) could be definitely avoided, and another shortest path $SP(8-10-11-6)$ (Fig.~\ref{fig: hypo1}(b)) could be found in the new graph $G_{2}$, which does not traverse any node in $MI\_SET$.

Secondly, in case that the shortest path does not work, the Hypo-Steiner Heuristic tries to find a second shortest constraint-satisfied path to connect a destination to the light-tree while Member-Only will stop the spanning of the current light-tree. Only making use of the shortest path in the original graph, not all the possible paths satisfying the splitting constraints could be enumerated by Member-Only algorithm. This is because that the shortest path from node $d_{i} \in D$ to a connector node $c_{j} \in MC\_SET$ in the original graph is always used to span the subtree. However, most of these shortest paths involve some nodes in $MI\_SET$, which are not permitted by the splitting constraint. In fact, if all the MI nodes and the used edges in $LT$ are deleted from the graph to get $G_{2}$, then in the following step, a secondly shortest path between two nodes (it is the shortest path in the new graph $G_{2}$) may be found by just using Dijkstra's algorithm. The most important thing is that it satisfies the light splitting constraint.

   \begin{figure}
        \begin{center}
        \includegraphics[width=.4\textwidth]{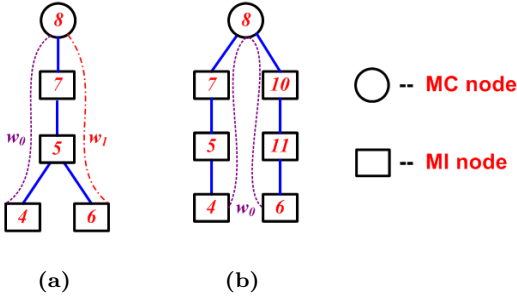}\\
        \end{center}
        \hspace{.5in}\mbox{\textbf{(a)}} \hspace{.76in}\mbox{\textbf{(b)}}
        \caption{For multicast session $ms_{1}$, (a) Light-tree built by Member-Only; (b) Light-tree built using Hypo-Steiner Heuristic.}
        \label{fig: hypo1}
    \end{figure}

\textbf{Example 2:}
Assume a multicast session $ms_{2}\big(s:8,D:(9\sim11)\big)$ is required in the NSF network, where only node 8 is an MC node. According to the Member-Only algorithm, node 10 is first added to the subtree, then node 11. At this moment, $MC\_SET=\{8, 11\}$, $MI\_SET=\{10\}$.
Noting that shortest paths $SP(8-10-12-9)$, $SP(8-10-14-9)$ and $SP(11-10-9)$ involve non-leaf MI node 10, which belongs to $MI\_SET$. Consequently, node 9 cannot be joined to the current subtree (using wavelength $w_{0}$), and a new light-tree on wavelength $w_{1}$ should be used as shown in Fig.~\ref{fig: hypo2}(a). However, by implementing the Hypo-Steiner Heuristic, MI node 10, links $(8,10)$ and $(10,11)$ are deleted from graph $G_{1}$ to get $G_{2}$. It is interesting to find that the shortest path $SP_{G_{2}}(8-7-5-4-9)$ in the new graph $G_{2}$ is able to connect node 9 to the current subtree, which is the shortest constraint-satisfied path between them. So, only one wavelength $w_{0}$ is required as shown in Fig.~\ref{fig: hypo2}(b).

    \begin{figure}
        \begin{center}
        \includegraphics[width=.25\textwidth]{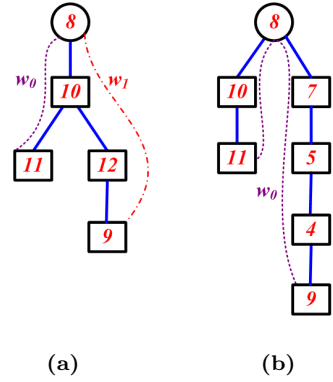}\\
        \end{center}
        \hspace{1in}\mbox{\textbf{(a)}} \hspace{.9in}\mbox{\textbf{(b)}}
        \caption{For multicast session $ms_{2}$, (a) Light-tree built by Member-Only; (b) Light-tree built using Hypo-Steiner Heuristic.}
         \label{fig: hypo2}
    \end{figure}

\subsection{Overview of Hypo-Steiner Light-tree Algorithm}
\label{sec: Overview of Hypo-Steiner Light-tree Algorithm}
\begin{theorem}
\label{theorem: 1}
Given $MI\_SET$ and $MC\_SET$ for a subtree $LT$ under construction, if $\exists$ one or several paths $P(d, c)$ satisfying (a) $d \in D$ and $c \in MC\_SET$, (b) $\forall v \in  P(d, c)$, $v \not \in MI\_SET$, HSLT algorithm is able to span at least one more destination to $LT$, and the path used to connect a destination to $LT$ is the shortest one among all the existing paths $P(d, c)$.
\end{theorem}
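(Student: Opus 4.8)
The plan is to reduce both halves of the statement---that a further destination is spanned, and that the connecting path chosen is shortest---to a single structural identity. At the current iteration HSLT runs Dijkstra on a graph $G_i$ which, by construction, is $G$ with the non-leaf MI nodes (the members of $MI\_SET$) and the already-occupied edges removed. I claim that the $d$--$c$ paths available in $G_i$ coincide, as weighted paths, with the feasible connecting paths $P(d,c)$ of the hypothesis. Once this identity is in place, both assertions follow with essentially no further computation.

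First I would fix notation for the current step. By lines~5 and~19 of Algorithm~\ref{alg1}, the net effect of the deletions is $G_i = G \setminus (MI\_SET \cup U)$, where $U$ is the set of edges already occupied by $LT$ on the current wavelength. I would then establish the correspondence by two inclusions. Any $d$--$c$ path lying in $G_i$ visits only vertices of $G_i$, none of which lies in $MI\_SET$, so it meets condition~(b); its endpoints give condition~(a); hence it is one of the paths $P(d,c)$. Conversely, a feasible path $P(d,c)$ meeting (a)--(b) (and, as forced by the wavelength-continuity constraint and the no-backward-reuse assumption of Section~\ref{Introduction}, avoiding the occupied edges $U$) uses no vertex of $MI\_SET$ and no edge of $U$, so every vertex and edge it traverses survives the deletion, and $P(d,c)$ is a path of $G_i$. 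Since deleting vertices and edges leaves the remaining edge costs unchanged, $c(\cdot)$ agrees on both descriptions, so the two families are identical as weighted objects.

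Given the correspondence, the existence claim is immediate: by hypothesis some feasible $P(d,c)$ exists, so $SP_{G_i}(d,c) \le c\big(P(d,c)\big) < \infty$, whence $\min_{d' \in D,\, c' \in MC\_SET} SP_{G_i}(d',c') < \infty$. This is precisely the guard $SP_{G_i}(\mathbf{d},\mathbf{c}) \neq \infty$ of the inner loop, so HSLT runs the body once more, selects the nearest destination $\mathbf{d}$ together with its path $SP_{G_i}(\mathbf{d},\mathbf{c})$, and attaches $\mathbf{d}$ to $LT$; at least one further destination is thereby spanned. For optimality, HSLT returns the pair $(\mathbf{d},\mathbf{c})$ attaining $\min SP_{G_i}(d,c)$ over admissible pairs, and by the weighted correspondence this minimum equals the least cost $c\big(P(d,c)\big)$ over all feasible $P(d,c)$; hence the path HSLT uses is the shortest among them.

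The hard part will be nailing down the correspondence, in particular the ``$\supseteq$'' inclusion, where two points must be checked. First, one must confirm that the deletion forming $G_i$ removes \emph{exactly} the non-leaf MI nodes together with the occupied edges, and in particular discards no connector node: a leaf MI node keeps its tap-and-continue capability and may lie on a feasible $P(d,c)$, so removing it would shrink $G_i$ below $G \setminus (MI\_SET \cup U)$ and could make HSLT miss a shorter feasible path, breaking optimality. Second, one must argue that feasibility of $P(d,c)$ genuinely forces it to avoid the occupied edges $U$; this is exactly where the wavelength-continuity constraint and the assumption that a fiber cannot carry the current wavelength backward once used forward enter, and they are what legitimize deleting $U$ when forming $G_i$. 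A minor auxiliary check is that the selected $\mathbf{d}$ is a genuinely new member: since $\mathbf{d}\in D$ and $D$ already has all previously served destinations removed, attaching $\mathbf{d}$ strictly increases the destinations carried by $LT$ rather than merely re-routing an old one.
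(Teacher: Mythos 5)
Your proof takes essentially the same route as the paper's: show that every feasible path $P(d,c)$ survives into the pruned graph $G_{i}$, so Dijkstra's search over all admissible $(d,c)$ pairs necessarily returns the cheapest such path, which simultaneously keeps the inner loop running (one more destination is spanned) and gives the optimality claim. The one place you go beyond the paper is in observing that line~19 of the algorithm also deletes the tree's already-occupied edges, so condition~(b) alone does not literally guarantee $P(d,c)\subseteq G_{i}$; the paper's proof passes over this silently, whereas you correctly flag that the wavelength-continuity and no-backward-reuse assumptions must be invoked to argue that a genuinely feasible connecting path avoids those edges as well.
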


\begin{proof}
In order to get a new modified graph $G_{i}$, algorithm HSLT deletes all the nodes in $MI\_SET$ from the previous graph $G_{i-1}$, thus no path in $G_{i}$ contains any non-leaf MI node in $LT$. If condition (b) $\forall v \in P(d, c), v \not \in  MI\_SET$ is satisfied, then no node in $P(d, c)$ will be deleted from the previous graph $G_{i-1}$. Thus the paths $P(d, c)$ will appear in the modified graph $G_{i}$. HSLT computes all the shortest paths from $d \in D$ and $c \in MC\_SET$ in the modified graph $G_{i}$. It is evident that the shortest one among the paths $P(d, c)$ will be found. Then the nearest destination is connected to $LT$ with the shortest path found.  After that, HSLT will continue the span procedure of $LT$ until blocked.
\end{proof}

\begin{theorem}
\label{theorem: 2}
Member-Only will fail to span another destination in $LT$, if none of the paths $P(d, c)$ is the shortest path in the network.
\end{theorem}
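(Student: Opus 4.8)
The plan is to argue directly from the operational definition of Member-Only given in Section~\ref{sec: Previous Work}. The single fact I would lean on is that Member-Only precomputes and stores the shortest paths between all pairs of nodes in the \emph{original} graph $G$, and that it can attach a destination $d \in D$ to a connector $c \in MC\_SET$ only when the stored shortest path $SP(d,c)$ itself avoids every node in $MI\_SET$. In other words, the algorithm never examines any non-shortest path: its only admissible moves are shortest paths $SP(d,c)$ that happen to satisfy the splitting constraint.

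First I would make precise the contrapositive form of the hypothesis. The assumption is that none of the constraint-satisfying paths $P(d,c)$ (those meeting conditions (a) and (b) of Theorem~\ref{theorem: 1}) is a shortest path between its endpoints in $G$. I claim this forces every shortest path $SP(d,c)$, for every pair with $d \in D$ and $c \in MC\_SET$, to traverse at least one node of $MI\_SET$. Indeed, suppose toward a contradiction that some shortest path $SP(d,c)$ avoided $MI\_SET$ entirely. Then $SP(d,c)$ would itself satisfy (a) and (b), i.e.\ it would be a legitimate path $P(d,c)$, and it would simultaneously be a shortest path in $G$ --- contradicting the hypothesis that no such $P(d,c)$ is a shortest path.

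Given that intermediate claim, the conclusion is immediate: for every candidate pair $(d,c)$ the stored shortest path passes through $MI\_SET$, so Member-Only's admissibility test fails uniformly, no nearest destination with a constraint-satisfying shortest path can be selected, and the algorithm terminates the current light-tree $LT$ without adding a further destination. I would then close by observing that this is precisely the gap exploited by HSLT (Theorem~\ref{theorem: 1}): a valid $P(d,c)$ may well exist, but because it is not a shortest path in $G$, Member-Only cannot see it, whereas HSLT recovers it as a shortest path in the pruned graph $G_i$.

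The hard part will be handling ties among shortest paths cleanly. The statement speaks of ``the shortest path,'' but in general several distinct minimum-cost $d$--$c$ paths may coexist, and Member-Only stores only one of them. To make the contradiction airtight I would read the hypothesis as ``no minimum-cost $d$--$c$ path avoids $MI\_SET$,'' so that whichever representative Member-Only happens to store is guaranteed to violate the constraint; this removes any dependence on the tie-breaking rule. The remaining details (that the per-pair failure implies global failure of the nearest-destination search, and that no edge or node of $G$ changes during a single light-tree's growth in Member-Only) are routine.
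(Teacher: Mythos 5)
Your proposal is correct and follows essentially the same route as the paper's proof: both argue that if no constraint-satisfying path $P(d,c)$ is a shortest path in $G$, then every shortest path $SP(d,c)$ with $d \in D$ and $c \in MC\_SET$ must intersect $MI\_SET$, so Member-Only's admissibility test fails for every candidate pair and the current light-tree is terminated. Your explicit treatment of ties among minimum-cost paths is a small refinement the paper leaves implicit, but it does not change the argument.
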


\begin{proof}
If all the paths $P(d, c)$ satisfying condition (a) and (b) are not the shortest path in the original topology, it means all the shortest paths $SP(d, c)$ with $d \in D$ and $c \in MC\_SET$ in the original graph have the following problem
\begin{equation}
  SP(d, c) \cap MI\_SET  \neq \O
\end{equation}
As a result, Member-Only will block the construction of the current light-tree $LT$ and continue another one to serve the destinations left in $D$ (for instance Examples 1 and 2).
\end{proof}

Through the above two theorems, we can see, under the same situation, HSLT has a bigger capability to span more destinations in one light-tree than Member-Only. This is why a HSLT light-tree is able to serve more destinations in most cases, especially for the first built light-tree. Once the number of destinations served in one light-tree increases, the number of light-trees required for a given multicast session will decline, which leads to the decrease of link stress. Since one multicast session requires fewer wavelengths, for a given number of wavelengths $W$ supported per fiber link, more multicast sessions are reasonable likely to be established simultaneously. This may help to improve the network throughput in the scenario that the network traffic load is not too high.

Although many heuristics~\cite{xjzhang2000,aHamad2006,fZhou2008PNC,nSreenath2001Photonic,fZhou2008ICCS} have been proposed for the construction of light-trees under the sparse light splitting constraint, none of them discuss about the cost bound of the light-trees. Since it is NP-Hard to compute the light-trees with the minimum cost, the cost bound is an important quality of the light-tress. Next, we give the cost bound of the HSLT light-tress.

\begin{theorem}
\label{theorem: 3}
Given a multicast session $ms(s,D)$, the total cost of the light-trees built by HSLT algorithm is bounded to
\begin{equation}
 K = |D| \leq c\big(ms(s,D)\big) \leq \frac{N(N-1)}{2}.
\end{equation}
\end{theorem}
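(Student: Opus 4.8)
The plan is to prove the two inequalities separately, since the lower bound is immediate while all the work lies in the upper bound. Throughout I would regard the construction as a sequence of exactly $K$ \emph{connection events}: because the sets $D_{i}$ are disjoint (Eq.~\ref{equation: disjoint}) and cover $D$, each destination is attached to some light-tree exactly once as a newly served destination, and HSLT attaches it via the shortest constraint-satisfying path to the nearest connector in $MC\_SET$. Hence the total cost $c\big(ms(s,D)\big)$ is precisely the sum of the lengths of these $K$ attachment paths. For the lower bound I would then note that every attachment path joins a destination not yet on the current tree to a node already on it, so it contributes at least one edge; summing the $K$ path lengths gives $c\big(ms(s,D)\big) \ge K = |D|$.

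For the upper bound I would index the connection events globally as $t = 1, \dots, K$ in the order HSLT performs them, and let $q_{t}$ denote the length of the $t$-th attachment path. The crux is the per-path estimate $q_{t} \le N - K + t - 1$, which I would argue as follows. At the moment of the $t$-th event, exactly $K - t$ destinations remain unserved besides the one being attached. Since HSLT connects the \emph{nearest} unserved destination along a \emph{shortest} path, no interior vertex of that path can be one of those $K - t$ still-unserved destinations: such a vertex would lie strictly nearer to the tree than the chosen destination, contradicting the nearest-destination selection. The path therefore avoids $K - t$ of the $N$ vertices entirely, and since its vertices are distinct it spans at most $N - (K - t)$ of them, giving at most $N - (K - t) - 1 = N - K + t - 1$ edges.

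Summing this estimate over $t$ and simplifying yields
\begin{equation}
c\big(ms(s,D)\big) \le \sum_{t=1}^{K}\big(N - K + t - 1\big) = \frac{N(N-1)}{2} - \frac{(N-K)(N-K-1)}{2}.
\end{equation}
Because $K \le N-1$ forces both factors of $(N-K)(N-K-1)$ to be nonnegative, the subtracted term is nonnegative, and the claimed bound $c\big(ms(s,D)\big) \le N(N-1)/2$ follows, with the final inequality tight only in the regime $K = N-1$.

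The step I expect to be the main obstacle is the rigorous justification of the per-path estimate, namely the claim that an attachment path carries none of the remaining unserved destinations in its interior. This needs the two guarantees HSLT enforces \emph{simultaneously} at each step — that the chosen destination is globally nearest over all $d \in D$ and $c \in MC\_SET$, and that the chosen path is a genuine shortest path in the reset-then-pruned graph $G_{i}$ — together with care that previously served destinations and relay (non-destination) nodes, which \emph{may} legitimately reappear as interior vertices across the per-tree graph resets, do not disturb the counting argument. Once that estimate is secured, everything else reduces to the routine arithmetic displayed above.
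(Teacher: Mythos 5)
Your proof is correct, but it decomposes the cost differently from the paper. The paper works \emph{tree by tree}: it argues that the destinations served by later trees cannot appear in $LT_{j}$, so $LT_{j}$ has at most $N-\sum_{i=j+1}^{k}l_{i}$ nodes and cost at most $N-1-\sum_{i=j+1}^{k}l_{i}$; it then sums over the $k$ trees, lower-bounds $\sum_{j}\sum_{i=j+1}^{k}l_{i}$ by $\frac{1}{2}k(k-1)$ via equation~(\ref{equation: li}), and maximizes the resulting quadratic $k(N-1)-\frac{1}{2}k(k-1)$ over $k\leq N-1$. You work \emph{attachment path by attachment path}: the $t$-th path avoids the $K-t$ still-unserved destinations, hence has at most $N-(K-t)-1$ edges, and summing gives $K(N-1)-\frac{1}{2}K(K-1)\leq \frac{N(N-1)}{2}$. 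The two computations are formally parallel (your $K$ plays the role of the paper's $k$), but your route buys two things. First, a rigorous justification of the exclusion step: your nearest-destination/minimality argument (an unserved destination interior to the chosen shortest path would be strictly closer to $MC\_SET$, contradicting the selection rule) is exactly the fact the paper needs but only asserts informally (``otherwise we don't have to span them in a second light-tree''), and it is also what guarantees your decomposition $c\big(ms(s,D)\big)=\sum_{t}q_{t}$ is exact, since used edges are deleted from $G_{i}$ so paths within one tree are edge-disjoint. Second, a sharper intermediate bound $\frac{N(N-1)}{2}-\frac{(N-K)(N-K-1)}{2}$ parameterized by the group size, whereas the paper must pass to the worst case $k=N-1$. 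The lower bound is handled essentially identically in both proofs (each tree contributes at least $l_{j}$ edges versus each attachment contributing at least one edge), so no issue there.
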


\begin{proof}
When finishing the first light-tree $LT_{1}$, the destinations left in $D \setminus D_{1}$ could not be included in $LT_{1}$, otherwise we don't have to span them in a second light-tree. For the same reason, after the $j^{th}$ light-tree is built, the destinations left in $D \setminus \big(\bigcup_{i=1}^{j}{D_{i}}\big)$ could not be included in the $j^{th}$ light-tree. Since equation~(\ref{equation: disjoint}), in the $j^{th}$ light-tree, there are at most $N - \sum_{i=j+1}^{k}{l_{i}}$ nodes. Thus the cost of the $j^{th}$ light-tree complies
\begin{equation}
c\big(LT_{j}(s,D_{j})\big) \leq N - 1 - \sum_{i=j+1}^{k}{l_{i}}
\end{equation}
Then, we obtain the total cost of the set of light-trees built for multicast session $ms(s,D)$
\begin{eqnarray}
\nonumber c\big(ms(s,D)\big) &=& \sum_{i=1}^{k}{c\big[LT_{i}(s,D_{i})\big]}\\
                             &\leq& \sum_{i=1}^{k}{\big(N - 1 - \sum_{i=j+1}^{k}{l_{i}}\big)}\\
\nonumber                    &=& k(N - 1) - \sum_{j=1}^{k}\sum_{i=j+1}^{k}{l_{i}}
\end{eqnarray}
According to equation~(\ref{equation: li}), it holds that
\begin{equation}
\sum_{i=1}^{k}{\sum_{i=j+1}^{k}{l_{i}}} \geq \frac{1}{2}k(k-1)
\end{equation}
Thus,
\begin{eqnarray}
\nonumber c\big(ms(s,D)\big) &\leq& k(N-1)-\frac{1}{2}k(k-1)\\
                             &=&-\frac{1}{2}\bigg[k-\big(N-\frac{1}{2}\big)\bigg]^{2} + \frac{1}{2}\big(N-\frac{1}{2}\big)^{2}
\end{eqnarray}
When $k=N-1$, we get the following inequality
\begin{equation}
c\big(ms(s,D)\big) \leq \frac{N(N-1)}{2}.
\end{equation}
Moreover, in the $j^{th}$ light-tree $LT_{j}(s,D_{j})$, there are at least $l_{j}+1$ nodes when the $j^{th}$ light-tree includes no other nodes than the source node and the destinations in set $D_{j}$, i.e., $LT_{j}(s,D_{j}) = \{s\} \cup {D_{i}}$. Then, the cost of $LT_{j}(s,D_{j})$ holds
\begin{equation}
c\big(LT_{j}(s,D_{j})\big) \geq l_{j}
\end{equation}
Thereby, we obtain the total cost of multicast session $ms(s,D)$,
\begin{eqnarray}
c\big(ms(s,D)\big) \geq \sum_{j=1}^{k}{l_{i}} = K = |D|
\end{eqnarray}
\end{proof}

\subsection{Time Complexity of Hypo-Steiner Light-tree Algorithm}
\label{sec: Time Complexity of Hypo-Steiner Light-tree Algorithm}
At every step, a new graph $G_{i}$ is generated. In order to find the nearest destination to the subtree built, Dijkstra's algorithm should be used compute all the shortest paths in the new graph $G_{i}$ from each destination $d\in D$ to the connector nodes in $MC\_SET$. So, the time complexity of each step is $|D|_{i} \times O(Dijkstra)$, where $|D|_{i}$ denotes the number of destinations left in set $D$ at the $i^{th}$ step. There are $K=|D|$ steps in total, and $|D|_{i} = K - i$. The most recent worst-case time complexity of Dijkstra is $O(NlogN+M)$~\cite{mBarbehenn1998}. Therefore, the overall time complexity is
\begin{equation}
\frac{1}{2}K(K+1) \times O(Dijkstra) = O\big(K^{2}(NlogN+M)\big)
\end{equation}
In fact, if we modify the Dijkstra's algorithm, the complexity could be reduced a lot. At each iteration, in order to find the nearest destination in set $D$, the connector nodes in $MC\_SET$ could be viewed ensemble as the source (set their distances to the source as 0), and thus only one shortest path tree is required in order to compute the distances and search the nearest destination. There are $K=|D|$ destinations in one multicast session, so Dijkstra's algorithm is only used for $K$ times. Thus, the time complexity of the proposed algorithm could be diminished to
\begin{equation}
K \times O(Dijkstra) = O\big(K(NlogN+M)\big)
\end{equation}

\section{Simulation And Numerical Results}
\label{sec: Simulation And Numerical Results}
In this section, we use the simulation to evaluate the average case performance of the proposed Hypo-Steiner light-tree (HSLT) algorithm. It is compared with the existing algorithms like Reroute-to-Source (R2S) and Member-Only (MO) algorithm~\cite{xjzhang2000}. The latter one is currently shown to provide the best link stress and total cost.

\subsection{Simulation Model and Evaluation Metrics}
\label{sec: Simulation Model and Evaluation Metrics}
The USA Longhaul network (28 nodes and 43 links) in Fig.~\ref{longhual} is employed as platform for the simulation. Both the members of a multicast group and the MC nodes are selected uniformly and independently in the topology. For each network configuration (given a number of MC nodes and a group size), 10000 random multicast sessions are generated and the result is the average of 10000 computations. In the simulation, five performance metrics are considered:
\begin{itemize}
\item \textbf{Link Stress}. It is the number of wavelengths required per fiber, which equals to the number of light-trees $k$ built for a multicast session.
\item \textbf{The number of destinations included in the first light-tree}, with $|D_{1}|=l_{1}$. In most cases, with HSLT and MO two light-trees are sufficient to cover all the group members (Fig.~\ref{fig: link stress}). The destinations served by the second light-tree are always fewer.  Thus only the first light-tree can signify the capacity of spanning destinations.
\item \textbf{Total Cost}. It is the wavelength channels used in all the light-trees built for a multicast session $ms(s,D)$.
\item \textbf{Throughput}. It is the number of multicast sessions accepted simultaneously for a given $W$.
\item \textbf{Efficiency of Wavelength Usage}. It can be calculated as
\begin{equation}
Eff(\lambda ) = \frac{1}{MW}\sum_{e\in E}\sum_{\lambda=1}^{W}{e(\lambda)}
\end{equation}
, where $e(\lambda)$ is defined as the utilization state of wavelength $\lambda$ on fiber link $e$
\begin{eqnarray}
\nonumber {\forall e\in E, ~and~ \lambda \in [1,W]},\\
 e(\lambda) = \left \{
\begin{array}{ll}
 1, $ if $ \lambda $ is used on link $ e\\
 0, $ if $ \lambda $ is available on link $ e
 \end{array}
 \right.
\end{eqnarray}

And we set $W$ = 20 in our simulation.
\end{itemize}

   \begin{figure}
            \begin{center}
            \includegraphics[width=.4\textwidth]{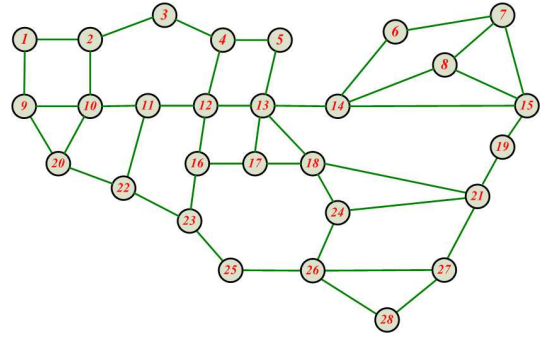}
            \end{center}
            \caption{USA Longhaul Network}
            \label{longhual}       
    \end{figure}

     \begin{figure*} [!t]
    \begin{center}
    $\begin{array}{c@{\hspace{1in}}c}
    \epsfxsize=2.17in \epsffile{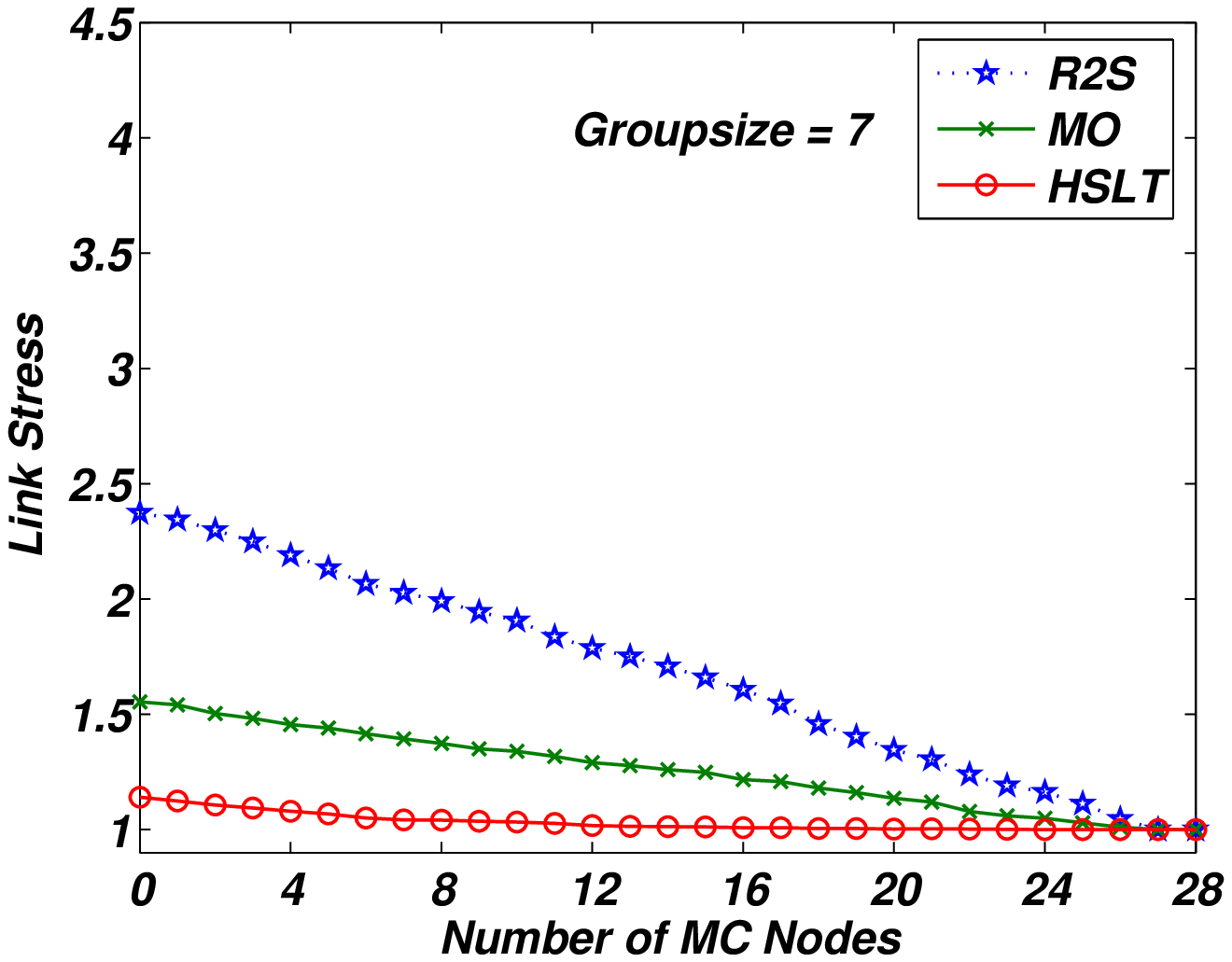} & \epsfxsize=2.17in \epsffile{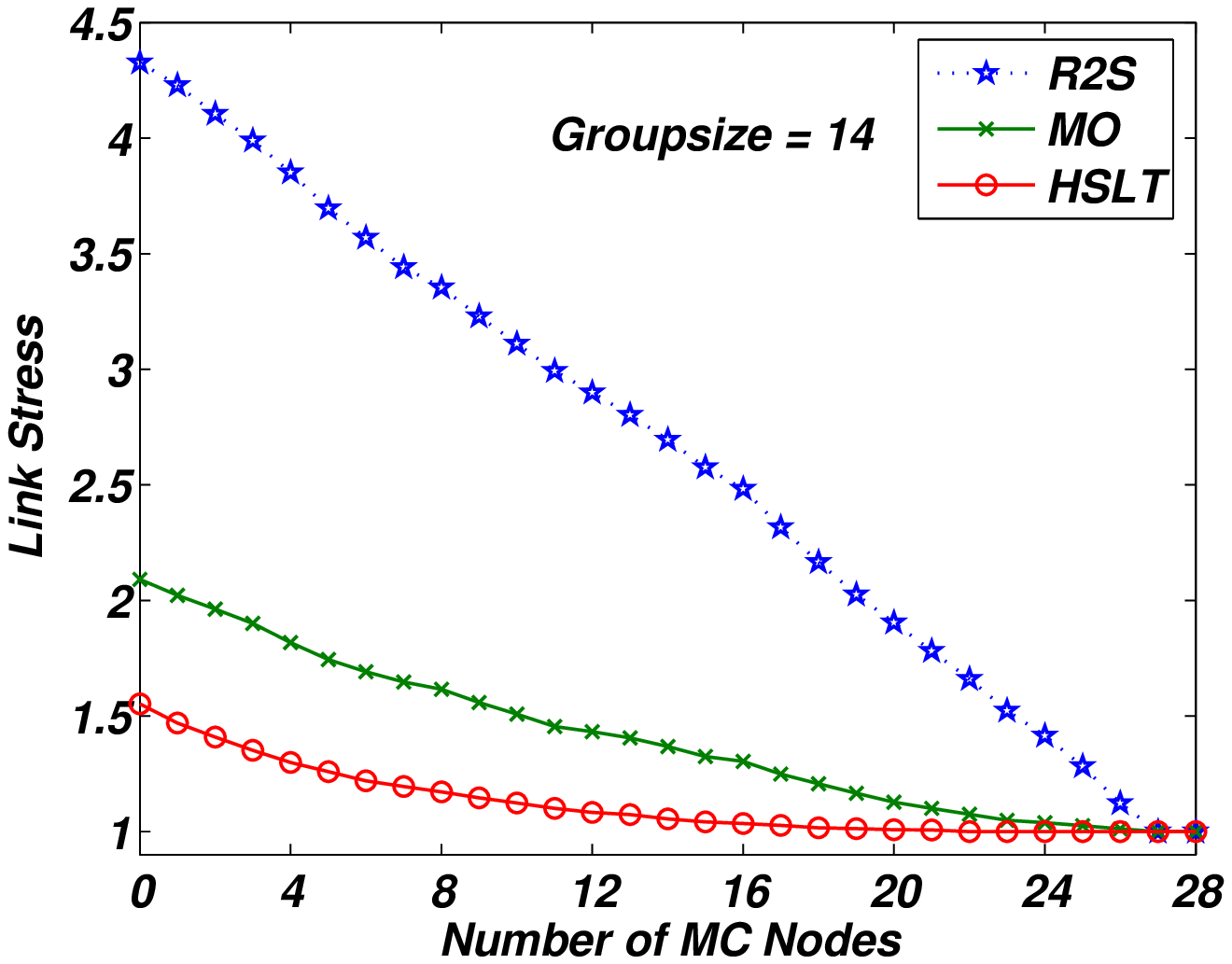} \\
    \mbox{\bf (a)} & \mbox{\bf (b)}
    \end{array}$
    \end{center}
    \caption{Comparison of the Link Stress when (a) groupsize = 7; (b) groupsize = 14 in the USA Longhaul Network}
    \label{fig: link stress}
    \end{figure*}

    \begin{figure*}
    \begin{center}
    $\begin{array}{c@{\hspace{1in}}c}
    \epsfxsize=2.17in \epsffile{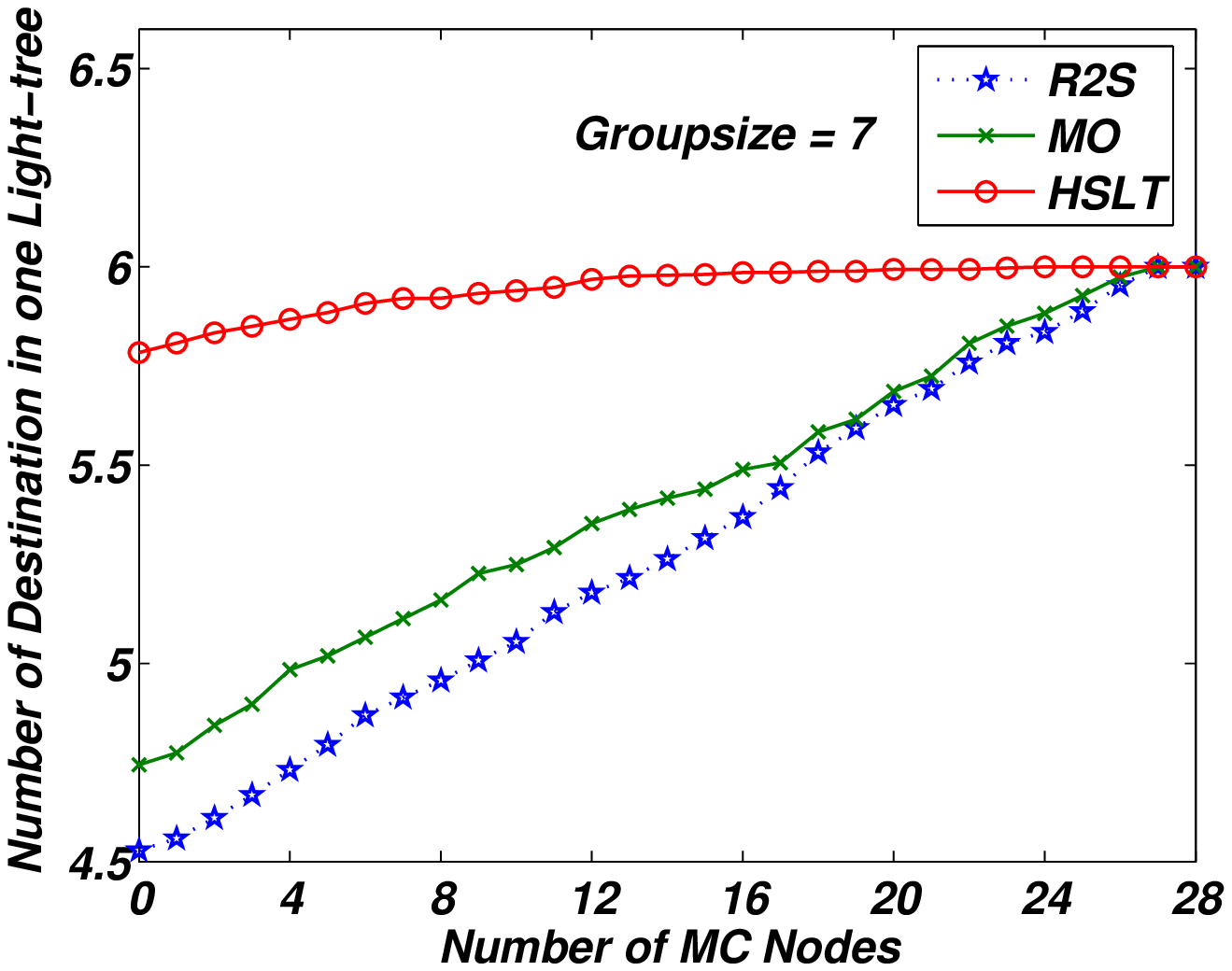} & \epsfxsize=2.17in \epsffile{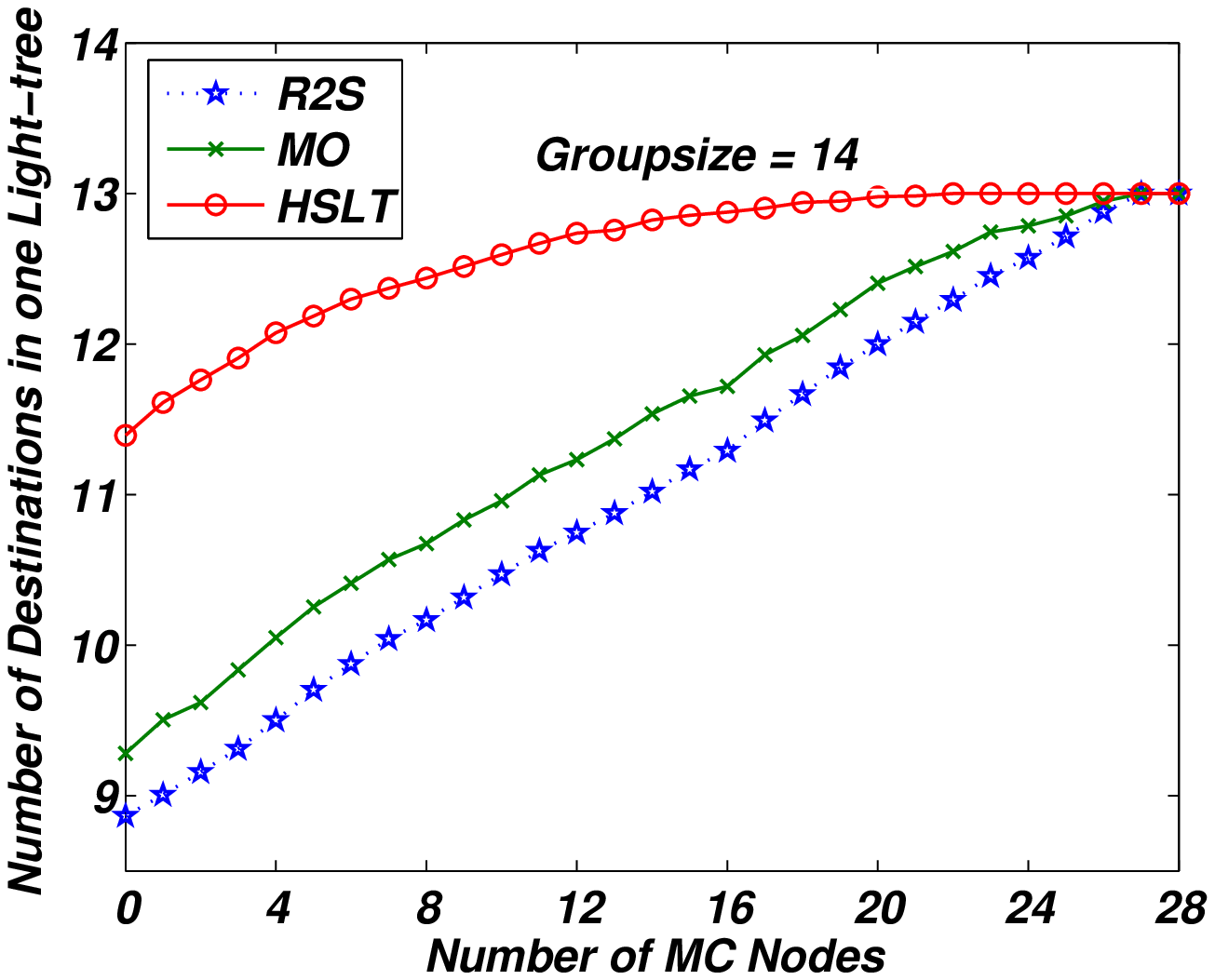} \\
    \mbox{\bf (a)} & \mbox{\bf (b)}
    \end{array}$
    \end{center}
    \caption{Comparison of the Number of Destinations included in a light-tree when (a) groupsize = 7; (b) groupsize = 14 in the USA Longhaul Network}
    \label{fig: number of destinations}
    \end{figure*}

    \begin{figure*}[!t]
    \begin{center}
    $\begin{array}{c@{\hspace{1in}}c}
    \epsfxsize=2.17in \epsffile{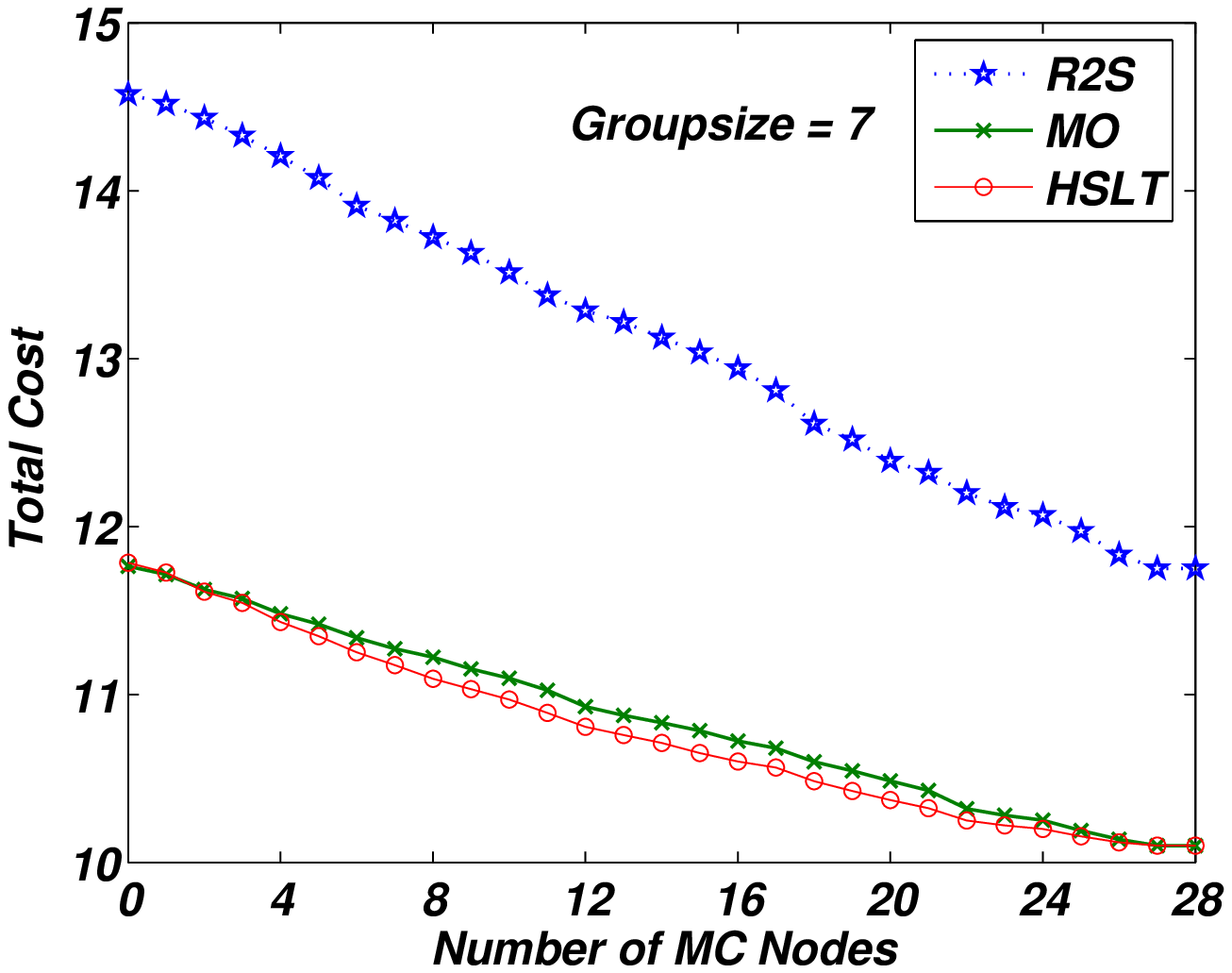} & \epsfxsize=2.17in \epsffile{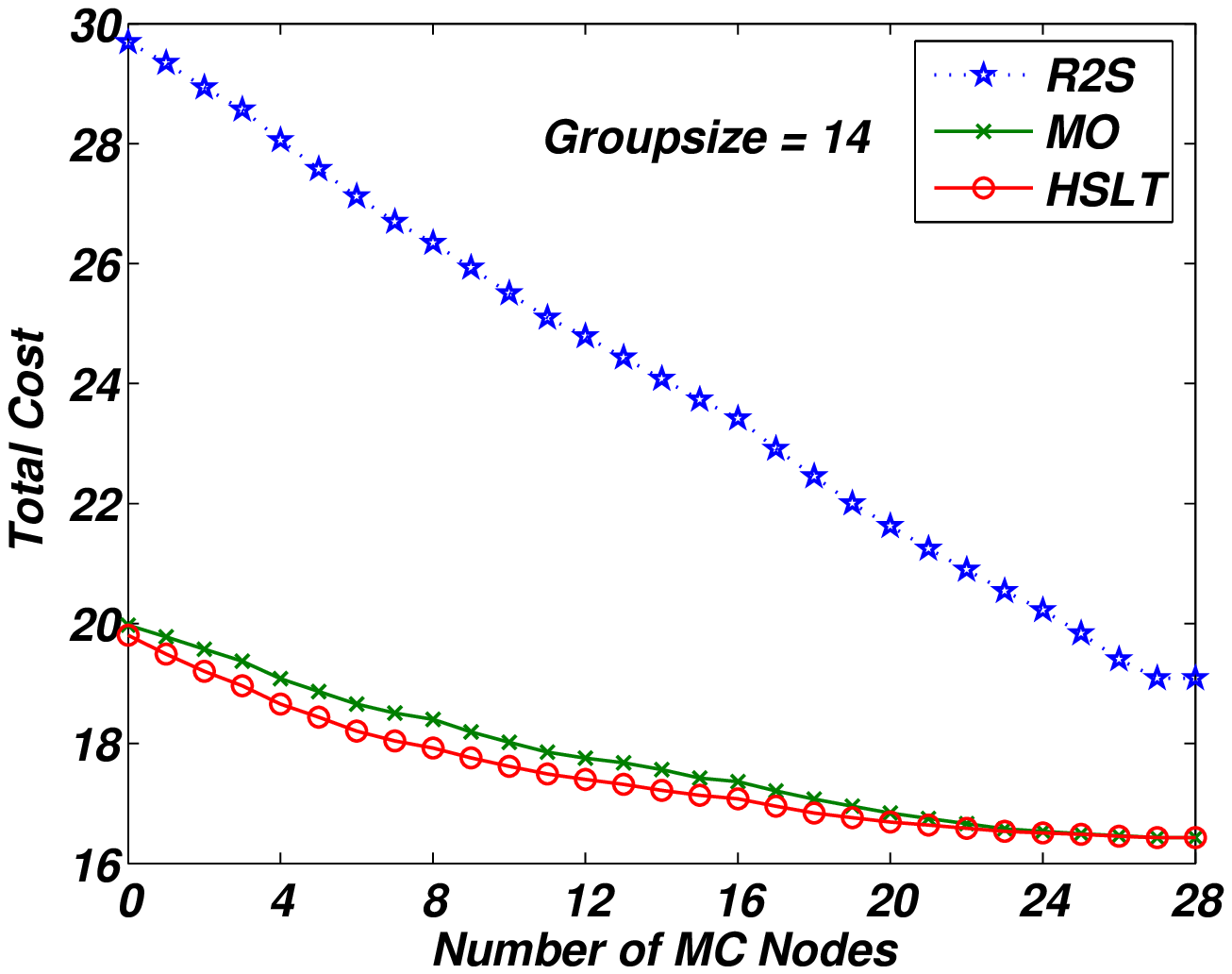} \\
    \mbox{\bf (a)} & \mbox{\bf (b)}
    \end{array}$
    \end{center}
    \caption{Comparison of the Total Cost when (a) groupsize = 7; (b) groupsize = 14 in the USA Longhaul Network}
    \label{fig: total cost}
    \end{figure*}

    \begin{figure*}
    \begin{center}
    $\begin{array}{c@{\hspace{1in}}c}
    \epsfxsize=2.17in \epsffile{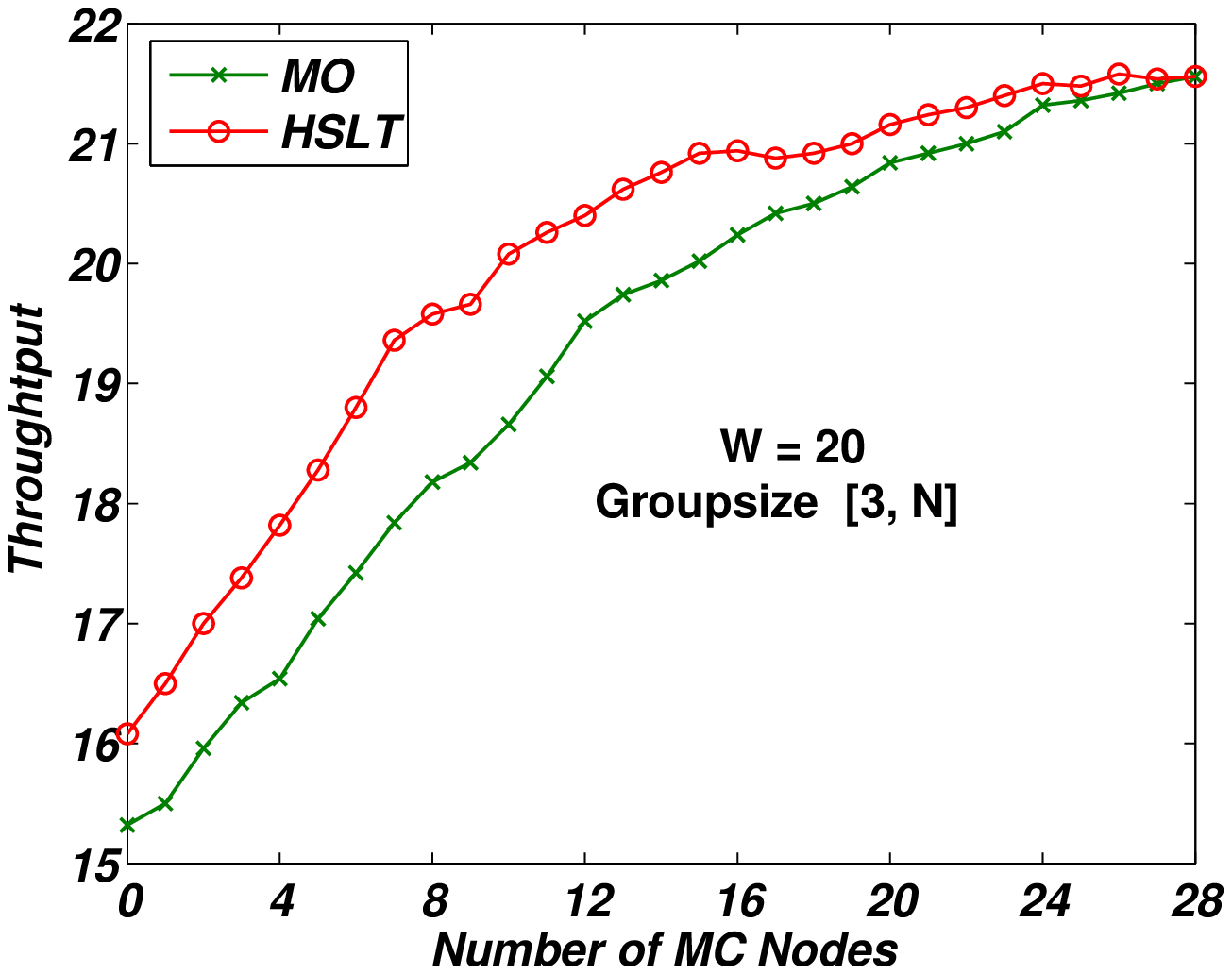} & \epsfxsize=2.17in \epsffile{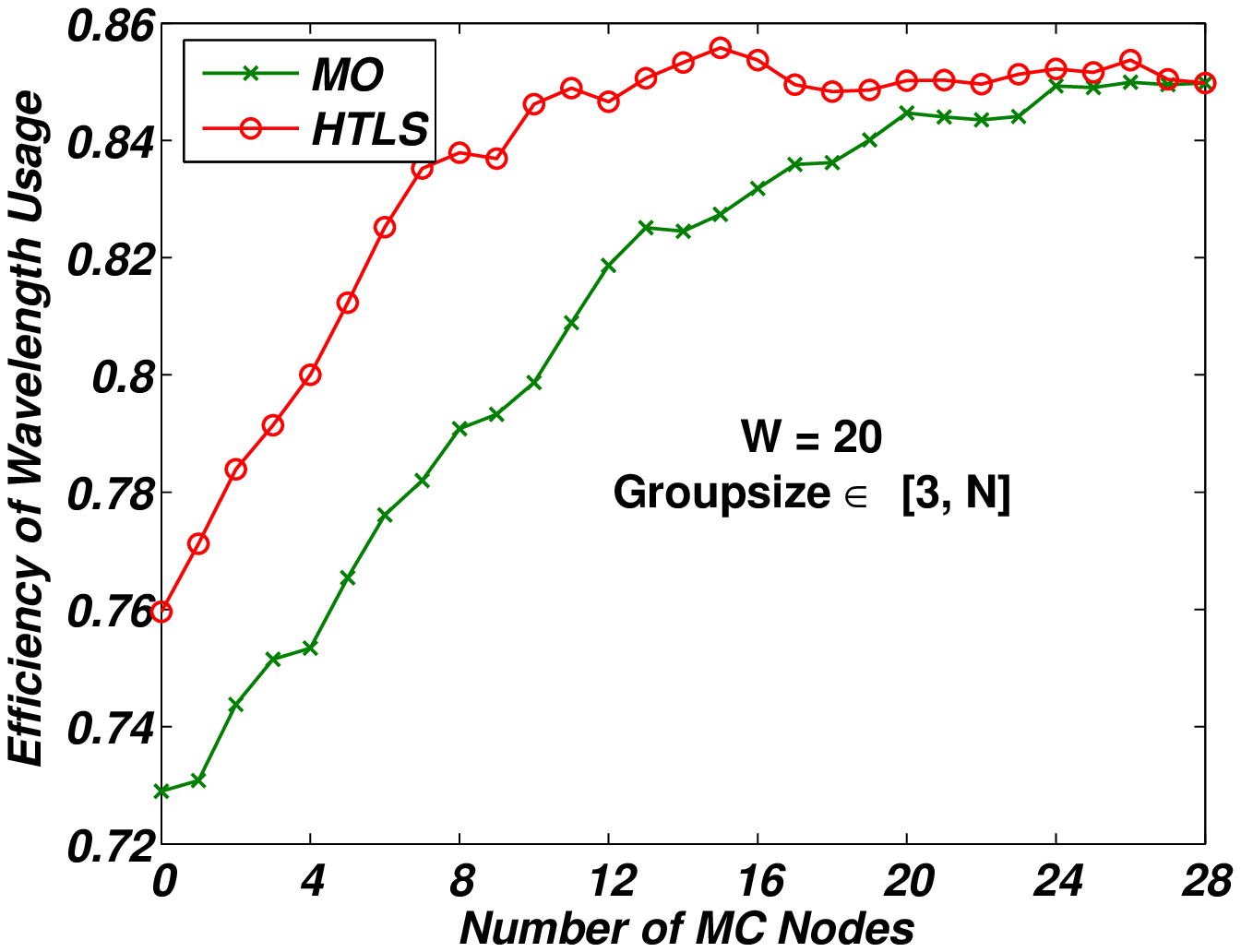} \\
    \mbox{\bf (a)} & \mbox{\bf (b)}
    \end{array}$
    \end{center}
    \caption{Comparison of the (a) Throughput; (b) Efficiency of Wavelength Usage in the USA Longhaul Network}
    \label{fig: Throughput and Efficiency}
    \end{figure*}

\subsection{Performance Analysis}
\label{sec: Performance Analysis}
The multicast session group size (including the source), $K+1$, is set to 7 and 14 respectively, which corresponds to the group member ratio of 25\% and 50\%, where
\begin{equation}
Ratio=\frac{(K+1)}{N}.
\end{equation}

First, we compare the quality of different light-trees built by Reroute-to-Source (R2S), Member-Only (MO) and HSLT, respectively.

(1) The link stress of the three types of light-trees is plotted in Fig.~\ref{fig: link stress}. HSLT can always get the best link stress among the three algorithms. If there is no splitter, the HSLT algorithm is able to save 0.6 and 2.9 wavelengths for a multicast session instead of using the other two algorithms when Ratio=50\%. The advantage of a HSLT light-tree becomes more evident as the group size grows, while it becomes less significant when the light splitting ratio grows.

(2) As plotted in Fig.~\ref{fig: number of destinations}(a) when group member Ratio = 25\%, HSLT can span up to about 1.5 destinations (1.5/6 = 25\%) more than R2S and one destination (1/6 = 17\%) more than MO if there is no splitter. And as shown in Fig.~\ref{fig: number of destinations}(b) where Ratio = 50\%, up to 2.5 (2.5/13 = 19\%) and 2 (2/13 = 15\%) more destinations could be included by a HSLT light-tree compared to R2S and MO, respectively. These results indicate that as the group size grows, more destinations could be spanned in a HSLT light-tree than that in the light-trees computed by the other two algorithms. The two figures also show that the advantage of a HSLT light-tree becomes significant as the number of MC nodes decreases.

(3) From the curves plotted in Fig.~\ref{fig: total cost}, HSLT has a slightly lower total cost than MO algorithm, which was supposed to achieve the best cost of all the current algorithms.

The above observations can be explained as follows. When the group size is big, in the modified graph more destinations could be joined to a HSLT light-tree via a hypo-shortest path while the other two algorithms cannot because they use and only use the shortest path in the original topology. This is why the HSLT algorithm has a bigger capacity to include more destinations in a light-tree than the others. However, as the ratio of MC nodes increases, when building a light-tree, the size of set $MC\_SET$ grows while the size of set $MI\_SET$ reduces, thereby leading to the release of light splitting constraint. Hence, the numbers of destinations included in these three kinds of light-trees grow until they reach the same value. Hence, the HSLT light-tree is more advantageous in sparse light splitting WDM networks. As far as the link stress, the HSLT algorithm is able to achieve the best performance; since the more destinations could be spanned in a light-tree, the fewer light-trees will be required by a multicast session.

Secondly, we investigate the network throughput and efficiency of wavelength usage when the number of wavelengths supported in a fiber link is set to $W = 20$, group size $K+1$ is a randomly variable uniformly generated from $[3,N]$, and the First-Fit~\cite{yzZhu2000} wavelengths assignment algorithm is employed. In Fig.~\ref{fig: Throughput and Efficiency}(a), with the help of HSLT light-trees, up to 1.6 more multicast sessions could be accepted in the network, which indicates a 9\% improvement of network capacity. And just because more wavelengths in fiber links could be used to establish some extra multicast sessions, the HSLT reaches high to 85\% wavelengths usage as shown in Fig.~\ref{fig: Throughput and Efficiency}(b).

\section{Conclusion}
\label{sec:Conclusion}
Sparse light splitting constraint and the absence of wavelength converters complicate multicast routing in WDM mesh networks. To solve the light-trees construction problem for multicast routing, the Hypo-Steiner Heuristic is introduced to overcome the drawback of Member-Only algorithm~\cite{xjzhang2000}, which only makes use of the shortest path in the original topology to connect a destination to a light-tree under construction. The Hypo-Steiner Heuristic always deletes the non-leaf MI nodes in the light-tree under construction from the topology, and searches a hypo-shortest path in the modified topology. At each step, after the pruning operation, there is no non-leaf MI node any more in the modified topology, thus the shortest path found (a hypo shortest path in the original topology) is always constraints-satisfied. With the Hypo-Steiner Heuristic, more possible paths could be found to connect a destination to the light-tree under construction. That is why more destinations could be spanned in a HSLT light-tree than Member-Only. This consequently results in the decrease of link stress, which equals to the number of light-trees required for a multicast session. Regarding the wavelength channel cost consumed by per multicast session, the total cost of light-trees built by Hypo-Steiner Heuristic is proved to be upper bounded to $N(N-1)/2$, where $N$ is the number of nodes in the network. Extended simulations are also implemented. The numeric results not only validate the analysis proof but also show that the Hypo-Steiner Heuristic is more advantageous than the existing light-tree construction algorithms in sparse light splitting all-optical WDM mesh Networks.


\begin{thebibliography}{}
%
\bibitem{xjzhang2000}
Xijun Zhang, John Wei, Chunming Qiao. Constrained multicast routing in WDM networks with sparse light splitting. IEEE/OSA Journal of Lightware Technology, 18(12): 1917-1927, 2000.

\bibitem{jsTurner2002}	
J.S. Turner. New Directions in Communications (or Which Way to the Information Age?). IEEE Communications Magazine, 40(5): 50-57, 2002.

\bibitem{lhSahasrabuddhe1999}
Laxman H. Sahasrabuddhe, Biswanath Mukherjee. Light-trees: optical multicasting for improved performance in wavelength-routed networks. IEEE Communications Magazine, 37(2): 67-73, 1999.

\bibitem{bMukherjee2000}
Biswanath Mukherjee. WDM optical communication networks: progress and challenges. IEEE Journal on Selected Areas in Communications, 18(10): 1810-1824, 2000.

\bibitem{mAli2000Cost}
Maher Ali, Jitender S. Deogun. Cost-effective implementation of multicasting in wavelength-routed networks. IEEE/OSA Journal of Lightwave Technology, 18(12): 1628-1638, 2000.

\bibitem{eDesurvire1991}
E. Desurvire (1991). Erbium-doped fiber amplifiers: principles and applications. New York, NY, USA:  Wiley.

\bibitem{sgYan2003}
Shuguang Yan, Jitender S. Deogun, Maher Ali. Network routing in sparse splitting optical networks with multicast traffic. Computer Networks, 41: 89-113, 2003.

\bibitem{rMalli1998}
R. Malli, Xijun Zhang, Chunming Qiao. Benefit of multicasting in all-optical networks. SPIE Proceeding on All-Optical Networking, 2531: 209-220, 1998.

\bibitem{xhJia2001}
Xiaohua Jia, Dingzhu Du, Xiaodong Hu, Mankei Lee, Jun Gu. Optimization of wavelength assignment for QoS multicast in WDM networks. IEEE Transaction on Communications, 49(2): 341-350 2001.

\bibitem{hTakahashi1980}
H. Takahashi, A. Matsuyama. An approximate solution for the Steiner problem in graphs. Math. Japonica, 24(6): 573-577, 1980.

\bibitem{aHamad2006}
Ashraf Hamad, Tao Wu, Ahmed E. Kamal, Arun K. Somani. On multicasting in wavelength-routing mesh networks. Computer Networks, 50: 3105-3164, 2006.

\bibitem{yzZhu2000}
Yingzhou Zhu, G.N. Rouskas, and H.G. Perros. A comparison of allocation policies in wavelength routing networks. Photonic Network Communications, 2(3): 265-293, 2000.

\bibitem{fZhou2008PNC}
Fen Zhou, Mikl\a'os Moln\a'ar, Bernard Cousin. Avoidance of multicast incapable branching nodes for multicast routing in wdm networks. Photonic Network Communications, to appear, 2009.

\bibitem{fZhou2009IC3N}
Fen Zhou, Mikl\a'os Moln\a'ar, Bernard Cousin. Is light-tree structure optimal for multicast routing in sparse light splitting WDM networks? The $18^{th}$ International Conference on Computer Communications and Networks, San Francisco, USA, August 2, 2009.

\bibitem{mBarbehenn1998}	
Michael Barbehenn. A note on the complexity of Dijkstra's algorithm for graphs with weighted vertices. IEEE Transaction on Computers, 47(2): 263, 1998.

\bibitem{nSreenath2001Photonic}
N. Sreenath, K. Satheesh, G. Mohan, C. Siva Ram Murthy. Virtual source based multicast routing in wdm optical networks. Photonic Network Communications, 3(3): 213-226, 2001.


\bibitem{fZhou2008ICCS}
Fen Zhou, Mikl\a'os Moln\a'ar, Bernard Cousin. Distance priority based multicast routing in WDM networks considering sparse light splitting. The 11$^{th}$ IEEE International Conference on Communication System, pp709-714, Guangzhou, China, November, 2008.




\end{thebibliography}
\end{document}